\def\BibTeX{{\rm B\kern-.05em{\sc i\kern-.025em b}\kern-.08em
    T\kern-.1667em\lower.7ex\hbox{E}\kern-.125emX}}
\newacronym{opf}{OPF}{optimal power flow}
\newacronym{res}{RES}{renewable energy sources}
\newacronym{eds}{EDS}{electrical distribution systems}
\newacronym{milp}{MILP}{mixed-integer linear programming}
\newacronym{minlp}{MINLP}{mixed-integer nonlinear programming}
\newacronym{misocp}{MISOCP}{mixed-integer second-order cone programming}
\newacronym{dg}{DG}{distributed generator}
\newacronym{der}{DER}{distributed energy resource}
\newacronym{ess}{ESS}{Energy storage systems}
\newacronym{mcs}{MCS}{Monte Carlo simulation}
\newacronym{wt}{WT}{wind turbine}
\newacronym{pv}{PV}{Photovoltaic}
\newacronym{soc}{SOC}{state of charge}
\newacronym{pdf}{PDF}{probability density function}
\newacronym{PDF}{PDF}{Probability density functions}
\newacronym{cdf}{CDF}{cumulative distribution function}
\newacronym{oem}{OEM}{optimal energy management}
\newacronym{vsc}{VSC}{voltage-source converter}
\newacronym{csc}{CSC}{current-source converter}
\newacronym{vpl}{VPL}{virtual power line}
\definecolor{yaxis}{rgb}{0.85, 0.33, 0.10}
\newtheorem{prop}{Proposition}
\begin{document}

\title{{A Compensation Mechanism for EV Flexibility Services using Discrete Utility Functions} % Title needs to be defined!!!
\thanks{This work was supported by the Netherlands Enterprise Agency (RVO) DEI+ project~{120037} ``\textit{Het Indië terrein: Een slimme buurtbatterij in de oude weverij}".}
}

\author{\IEEEauthorblockN{Juan S. Giraldo\IEEEauthorrefmark{1}, Nataly Ba\~{n}ol Arias\IEEEauthorrefmark{1}, Edgar Mauricio Salazar Duque\IEEEauthorrefmark{2}, Gerwin Hoogsteen\IEEEauthorrefmark{1},\\ and Johann L. Hurink\IEEEauthorrefmark{1}}
% \IEEEauthorblockA{Faculty of Electrical Engineering, Mathematics, and Computer Science, University of Twente, the Netherlands
% }
\IEEEauthorrefmark{1} dept. of Electrical Engineering, Mathematics and Computer Science, University of Twente, the Netherlands\\
\IEEEauthorrefmark{2} Electrical Energy Systems group, Eindhoven University of Technology, the Netherlands
}

\maketitle
% \IEEEpubid{\begin{minipage}{\textwidth}\ \\ \\ [12pt]
%   978-1-6654-8032-1/22/\$31.00~\copyright2022 IEEE
% \end{minipage}} 

\begin{abstract}
Compensation mechanisms are used to counterbalance the discomfort suffered by users due to quality service issues. Such mechanisms are currently used for different purposes in the electrical power and energy sector, e.g., power quality and reliability. This paper proposes a compensation mechanism using EV flexibility management of a set of charging sessions managed by a charging point operator (CPO). Users' preferences and bilateral agreements with the CPO are modelled via discrete utility functions for the energy not served. A mathematical proof of the proposed compensation mechanism is given and applied to a test scenario using historical data from an office building with a parking lot in the Netherlands. Synthetic data for 400 charging sessions was generated using multivariate elliptical copulas to capture the complex dependency structures in EV charging data. Numerical results validate the usefulness of the proposed compensation mechanism as an attractive measure both for the CPO and the users in case of energy not served.  

\end{abstract}

\begin{IEEEkeywords}
Aggregators, flexibility management, demand-side management, electric vehicles, revenue adequacy.
\end{IEEEkeywords}
%\vspace{-15pt}

% \section*{Notation}
% \input{Notation}
%
% \vspace{-5pt}

%%%%%%%%%%%%%%%%%

\section{Introduction}

% \subsection{Motivation}
% Besides being an environmentally-friendly option for transportation, electric vehicles (EVs) can provide several services due to the controllable nature of their load, e.g., congestion management, peak shaving, voltage/frequency regulation, among others. 
%
%These services may be of value for aggregators or even distribution system operators (DSOs) as technical problems, such as voltage violations and branch overloading, are expected to be more likely in distribution systems if no actions are taken. 
%
% Therefore, network issues can be tackled through flexibility management frameworks to avoid common technical issues in distribution systems. This is known as DSO’s flexibility procurement, and it has gained momentum during the last few years due to its economic advantages over other solutions such as grid reinforcement. However, for a flexibility scheme to be successful, it must guarantee that all participants can benefit from contributing and are thus willing to engage in the flexibility scheme. Additionally, real utility functions are often nonlinear and nonconvex; however, simplified representations for utility functions are a common practice in the literature. 
%

The worldwide adoption of electric vehicles is rapidly increasing and electric mobility is expected to play a crucial role in future energy systems.
%
% For instance, in the current Dutch scenario, only 10\% of all new passenger vehicles sold have an electric powertrain and a plug.
For example, in the Netherlands, it is expected that all new passenger vehicles sold must be zero-emission by 2030~\cite{rvo2022}. EVs lead to a considerable new demand volume, which, combined with the rapid pace of the EV adoption rates, might cause grid congestion and technical issues~\cite{limmer2019peak}. For instance, simultaneously charging a large number of EVs in a specific area may cause power outages and power quality issues~\cite{Nijenhuis_2021}.
Nonetheless, EVs also represent new potential flexibility sources due to their inherent load characteristics (manageable), which may help deal with technical issues in the distribution grid in the form of services~\cite{arias2019distribution}. 
Therefore, compensation mechanisms that allow EV users to provide flexibility services to charging point operators (CPOs) acting as aggregators, distribution system operators, or energy suppliers might be needed in future scenarios~\cite{Saner2022, Tsaousoglou_2021, tsaousoglou2022market}.
On the other hand, the emergence of quality policies for EV charging services in terms of energy supplied can be designed using compensation mechanisms. Such mechanisms might help ensure service quality by supporting customers' rights. 
%
% Maybe include current/future scenario in the Netherlands in numbers?
% 2025 % 50% of all new passenger cars sold will have an electric power train and a plug. At least 30% of these vehicles (15% of the total) will be zero emission (BEV or FCEV)¹.
% 2030 100% of all new passenger cars sold will be zero emission².

% [Until 2021]  Passenger cars (BEV) 2.55% of total fleet.
%               Buses (BEV) 14.71% of total fleet.     

%(https://english.rvo.nl/information/electric-transport)
%  By 2025, the number of charging points must have tripled. By 2030, that number must have eight-folded. 

%(https://www.rvo.nl/sites/default/files/2021/12/electric-vehicles-statistics-in-the-netherlands.pdf)

% reduce the Netherlands’ greenhouse gas emissions by 49% by 2030, compared to 1990 levels, and a 95% reduction by 2050. [This means]...  With regard to passenger transport, a charging capacity will be provided of 1.8 million public, semi-public and private charging stations by 2030.

% \cite{Saner2022} - cooperation energy supplier + EV aggregator

% \cite{Khomami_2020} 

% \subsection{Literature review}
% Compensation ENS for conventional loads.
Compensation mechanisms are already used for different aspects of power systems. For example, power supply reliability is an accepted aspect for which customers can be compensated after experiencing electricity outages, such as a power blackout~\cite{eur_com_2019}. According to the societal appreciation analysis in~\cite{eur_com_2019}, users are willing to pay (WTP) to avoid power outages but are also willing to accept (WTA) power outages under a proper monetary compensation, e.g., in the Netherlands, the WTP has been estimated to 0.61~€/kWh, and the WTA is around 25~€/kWh.  
Compensation mechanisms regarding power quality issues have mainly been used for commercial and industrial users. According to~\cite{Beleiu2018}, power quality disturbances such as voltage sag/swells and harmonic distortion cause direct and indirect costs due to interruptions, process slowdowns, and equipment failure, among others. These costs range from a few thousand to millions of euros, and network operators are obliged to compensate their customers~\cite{Beleiu2018}. 
%
% for a specific case in Roumania, the costs associated with PQ disturbances were calculated as €105.861 per year, 
% For businesses, the complete power quality solution can cost between € 388 and €1165 per kilowatt (kW), excluding equipment installation [12]. For network operators, redesigning the distribution network or making new investments in the supply network infrastructure may lead to prohibitive costs.
%
% Power outages and power quality issues can be caused by a large amount of EVs simultaneously charging in an specific area~\cite{Nijenhuis_2021}.
Consequently, paying for complaints of power outages to commercial or residential users might be more expensive than compensating a group of EV users for having energy not served.

Concerning EV flexibility, corresponding services have been widely studied in the literature.
% Most works have proposed charging schemes to deal with congestion and voltage issues in the electrical grid, where the benefit for EV owners is reflected via charging costs reduction. However, in some cases, at the expense of not fully charging their EVs.
Authors in \cite{gadea2018market} proposed a compensation scheme based on a price model that considers reductions in the distribution transport tariff and the energy taxes to encourage EV grid support for DSOs while avoiding grid reinforcements.%  demonstrating the potential of EVs to provide services and highlighting the need for market frameworks for this purpose. 
~In \cite{Sarker2017}, EV owners receive monetary compensation for handing over the charging control of their vehicles to an aggregator to deal with congestion issues in the transformer. %Likewise, in \cite{Sarker2015}, EV owners receive additional compensations to mitigate overloads in the distribution system in the form of demand response incentives. 

% Financial compensation for capacity and energy provision, + battery degradation
Monetary compensation for EV users providing ancillary services has also been studied in the literature. Authors in~\cite{Sun2021} and \cite{Yanchong2022} propose that users can get an extra economic compensation for participating in voltage regulation and in regulation markets, respectively. Similarly, three different tariff structures are proposed in~\cite{FREITASGOMES2021} to compensate EV owners for energy services, in which the demand is shifted toward a specific time interval. %in \cite{Meijssen2019} a remuneration scheme for vehicle-two-grid (V2G) contracts is introduced, where EV owners are compensated with a fixed rate based on the plug-in time plus a variable rate for the number of extra hours the EV is plugged above the required plug-in time. 
Other forms of compensations to EV owners for services provision are those related to battery degradation. For instance, a compensation mechanism based on carbon trading is proposed in~\cite{Qingyun2022} to reduce battery degradation costs.%, while in~\cite{Shubham2021}, EV owners receive a remuneration for the amount of degradation in the battery and for their participation in vehicle-two-grid contracts. 

EV users perceive any amount of energy not served as discomfort. However, depending on the user's preferences and needs, the level of discomfort might differ due to the energy not served. In this sense, a compensation mechanism for energy not served could be beneficial from several perspectives. For instance, the grid operator can avoid extra costs for grid expansion/reinforcements through flexibility procurement~\cite{brinkel2020should}, the CPO/aggregator can offer flexibility to the grid operator to improve its local objectives, and the EV owners' discomfort can be levelled in exchange for monetary compensation~\cite{Ricardo2022}. Furthermore, prosumers' utility functions can be nonlinear and nonconvex, or linear and flat with few large variations~\cite{Charalampos_2021}, motivating the use of discrete utility functions. 

This paper proposes a compensation mechanism for energy not served in EV charging sessions. Hereby, the flexibility services of EV users are taken into account via discrete utility functions mapping the energy not served. The contributions of this paper are as follows: 

\begin{itemize}
    \item A novel compensation mechanism that allows EV users to receive compensation for flexibility services via energy not served using discrete utility functions. 
    \item A general, flexible formulation for the utility functions of EV users that can be used in different power system applications regarding of its nature (convex or non-convex).  
    \item Application of the multivariate t-distribution elliptical copulas (MVT) for modeling residential load profiles given in~\cite{Duque_2021} to simulate synthetic EV charging sessions using real historical data.
\end{itemize}

The reminder of the paper is as follows: Section~\ref{sec_2} presents a contextualization of the problem and introduces the proposed compensation mechanism. Section~\ref{sec_3} contains the mathematical optimization model considering the discrete utility functions. The study case, tests, and results are presented in Section~\ref{sec_4}. Finally, the conclusions of the paper can be found in Section~\ref{sec_5}.   

\section{Problem Definition and Proposed Compensation Mechanism}  \label{sec_2}

\subsection{Problem definition}

We consider an aggregator or charging point operator (CPO) which is responsible for supplying a set of charging sessions \mbox{$\Omega_{\mathrm{N}} = \{1, 2, ...,  N\}$}, with $N$ being the number of charging sessions or EV users participating in the flexibility scheme over a scheduling horizon~$\Omega_{\mathrm{T}}$. We assume that each charging session has an energy requirement from the CPO under certain commercial agreements. Consider now a charging session \mbox{$n \in \Omega_{\mathrm{N}}$} which is characterized by the tuple $\mathrm{A}_n = (\mathrm{a}_n,\,\mathrm{d}_n,\,\mathrm{E}_n,\,\boldsymbol{u}_{n}, \,\gamma_{n})$, comprised by the session's arrival time $\mathrm{a}_n$, departure time $\mathrm{d}_n$, required charging energy $\mathrm{E}_n$, its utility function $\boldsymbol{u}_{n}$, and an acceptable energy percentage $\gamma_{n}$.

Notably, we assume that the utility function \mbox{($\boldsymbol{u}_{n}=u\left(\phi_{n}\right)$)} represents a bilateral agreement between the CPO and the charging session as a function of the energy not served at the session's departure time ($\phi_{n}$). This agreement implies that a session $n$ complies with a lower final state of charge than initially requested; however, only in exchange for compensation. Furthermore, the CPO gains \textit{certainty} on the sessions' characteristics and controllability of its state. It is noteworthy that achieving this certainty requires that the user shares and satisfies the promises in $\mathrm{A}_n$, which can be tackled using truthful compensation and penalty mechanisms as in~\cite{Tsaousoglou_2021}. Hence, in this paper we assume that $\mathrm{A}_n$ is a truthful declaration from each charging session.

% \begin{figure}[!t]
% \centering
% \includegraphics[width=0.9\columnwidth]{}
% \caption{\textcolor{blue}{This figure does not represent the idea.}}
% \label{ev_settlement}
% \end{figure}

\subsection{Proposed compensation mechanism}

Let $\mathrm{c}_{n,t}$ denote the dynamic electricity price session $n$ must pay at time $t$, $x_{n,t}$ the power consumption allocated to the session at that period, and $\Delta\mathrm{t}$ the time length of the discretized scheduling horizon. Then, the total energy cost a user must pay to the CPO for the served energy is given by
\begin{flalign}\label{energy_cost_user}
& \pi_{n}\left(x_{n,t},\phi_{n}\right)\!=\!\Delta\mathrm{t}\sum_{t\in\Omega_{\mathrm{T}}}\mathrm{c}_{n,t}x_{n,t} - {u}\left(\phi_{n}\right) & \forall n\in \Omega_{\mathrm{N}}
\end{flalign}
\noindent where ${u}\left(\phi_{n}\right)$ represents a non-negative function mapping the energy not served. For the sake of simplicity, we assume that utility functions are rational in the game-theoretic sense, i.e., they are monotonic increasing with the energy not served; then,~$\max\{u\left(\phi_{n}\right)\}\xrightarrow[]{}\phi_{n}=\mathrm{E}_{n}$.

It is of the CPO's interest to pursue that~\eqref{energy_cost_user} is \mbox{\textit{revenue adequate}}, i.e., when the total energy cost received from all participants is greater than or equal to zero~\cite{Ming_2018}. Guaranteeing revenue adequacy ensures the CPO to avoid budget deficit. Let \mbox{$\mathrm{e}_{n}$} be the acceptable fraction of the energy required by session $n$ ($\mathrm{e}_{n}=\gamma_{n}\mathrm{E}_{n}$), and \mbox{$\mathrm{C}_{n}=\min\limits_{t\in\Omega_{\mathrm{T}}}\{\mathrm{c}_{n,t}\}$} be the minimum electricity price charged to the session during the scheduling horizon. Similarly, the supplied energy to a session is expressed as \mbox{$\mathcal{H}_{n}=\Delta\mathrm{t}\sum\limits_{t\in\Omega_{\mathrm{T}}}x_{n,t}$}. Then, the following is stated:
\begin{prop}\label{prop_1}
If $\boldsymbol{u}_{n}$ is capped such that \mbox{$u\left(\phi_{n}\right)\leq\mathrm{C}_{n}\mathrm{e}_{n}$} for each charging session $n\in\Omega_{N}$, the energy cost in~\eqref{energy_cost_user} is revenue adequate for the CPO if at least the acceptable energy is served to all sessions.  
\end{prop}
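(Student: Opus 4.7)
The plan is to establish revenue adequacy session-by-session: if I can show that $\pi_{n}\left(x_{n,t},\phi_{n}\right)\geq 0$ for every $n\in\Omega_{\mathrm{N}}$, then summing over the set of charging sessions immediately yields the global revenue adequacy condition that the CPO requires. So the task reduces to proving a single per-session inequality under the two hypotheses supplied: the cap $u(\phi_{n})\leq \mathrm{C}_{n}\mathrm{e}_{n}$ and the service guarantee $\mathcal{H}_{n}\geq \mathrm{e}_{n}$.

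My approach is a simple chain of three inequalities connecting the billed revenue $\Delta\mathrm{t}\sum_{t}\mathrm{c}_{n,t}x_{n,t}$ down to $u(\phi_{n})$. First, I would replace each time-varying tariff $\mathrm{c}_{n,t}$ by its minimum $\mathrm{C}_{n}=\min_{t}\mathrm{c}_{n,t}$, which is a valid lower bound provided $x_{n,t}\geq 0$ (reasonable in a charging context), giving $\Delta\mathrm{t}\sum_{t}\mathrm{c}_{n,t}x_{n,t}\geq \mathrm{C}_{n}\Delta\mathrm{t}\sum_{t}x_{n,t}=\mathrm{C}_{n}\mathcal{H}_{n}$. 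Second, I invoke the hypothesis that at least the acceptable energy is delivered, $\mathcal{H}_{n}\geq \mathrm{e}_{n}$, so that $\mathrm{C}_{n}\mathcal{H}_{n}\geq \mathrm{C}_{n}\mathrm{e}_{n}$ (again using $\mathrm{C}_{n}\geq 0$). Third, I apply the assumed cap $u(\phi_{n})\leq \mathrm{C}_{n}\mathrm{e}_{n}$ to conclude $\Delta\mathrm{t}\sum_{t}\mathrm{c}_{n,t}x_{n,t}\geq u(\phi_{n})$, i.e.\ $\pi_{n}\geq 0$. Summing over $n\in\Omega_{\mathrm{N}}$ then gives the claim.

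The argument is essentially a telescoping lower bound, so there is no substantial computational obstacle. The main conceptual point to state carefully is the non-negativity of the tariffs and of $x_{n,t}$, which is needed to go from the pointwise minimum $\mathrm{C}_{n}$ to a lower bound on the weighted sum, and again when scaling by $\mathrm{e}_{n}$. A secondary subtlety worth flagging is that the result is in fact stronger than the definition of revenue adequacy demands: it ensures per-session non-negativity rather than merely aggregate non-negativity, so no cross-subsidization between EV users is required for the CPO to avoid a deficit. I would close the proof by remarking that equality is tight only when the served energy exactly equals $\mathrm{e}_{n}$, the entire charging session is priced at the horizon minimum, and the utility cap binds, which clarifies when the CPO's margin on a session vanishes.
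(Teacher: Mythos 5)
Your proof is correct and follows essentially the same route as the paper: lower-bound the billed revenue by $\mathrm{C}_{n}\mathcal{H}_{n}$, invoke the service guarantee $\mathcal{H}_{n}\geq\mathrm{e}_{n}$, and apply the cap $u(\phi_{n})\leq\mathrm{C}_{n}\mathrm{e}_{n}$. If anything, your chain of inequalities treats the general case $\mathcal{H}_{n}\geq\mathrm{e}_{n}$ directly and per session, whereas the paper only works out the boundary case $\mathcal{H}_{n}=\mathrm{e}_{n}$ explicitly and appeals to monotonicity of $u$ for the rest, so your write-up is, if anything, slightly tighter.
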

\begin{proof}
The payment for the energy served by the CPO from session $n$ is bounded from below by $\mathrm{C}_{n}\mathcal{H}_{n}$. Notice that the energy not served to the session is defined as \mbox{$\phi_{n} = \mathrm{E}_{n}-\eta_{n}\mathcal{H}_{n}$}, where $\eta_{n}$ is the session's charging efficiency. The efficiency will be treated here as $\eta_{n}=1$ for simplicity.  %, then \mbox{$u\left(\mathrm{E}_{n}\right) \leq \mathrm{C}_{n}\mathrm{e}_{n}$}.
Assume now the case where exactly the acceptable energy fraction is served, i.e., \mbox{$\mathcal{H}_{n}=\mathrm{e}_{n}$}, implying that \mbox{$\phi_{n}^{\prime}=\mathrm{E}_{n}-\mathrm{e}_{n}=\left(1-\gamma_{n}\right)\mathrm{E}_{n}$}, and \mbox{$u\left(\phi_{n}^{\prime}\right)\leq u\left(\mathrm{E}_{n}\right)\leq \mathrm{C}_{n}\mathrm{e}_{n}$}. Thus, the payment for the acceptable energy is revenue adequate since $\pi_{n}\geq0$, where the total energy cost is $\pi_{n}=\mathrm{C}_{n}\mathcal{H}_{n} - u\left(\phi_{n}^{\prime}\right)$; which implies that $\pi_{n}< 0$ is conditioned to~\mbox{$\mathcal{H}_{n}<\mathrm{e}_{n}$}.

Hence, the energy cost is always revenue adequate for the CPO ($\pi_{n}>0$) as long as the CPO supplies at least the acceptable energy to all sessions and \mbox{$u\left(\phi_{n}\right)\leq\mathrm{C}_{n}\mathrm{e}_{n}$}.  
\end{proof}

As a remark, we point out that a reduction in the total final energy cost ($\mathcal{P}={\sum\limits_{n} \pi_{n}}$) would not necessarily reflect a drop in the revenue of the CPO since the latter may be able to receive a settlement from external parties due to flexibility services, e.g., by its participation in a local flexibility market~\cite{tsaousoglou2022market}.

\section{Mathematical Model for EV Flexibility Management and Discrete Utility Functions} \label{sec_3}

% \section{System Model} 

\subsection{Charging Sessions}
We assume a reservation system is given that accepts charging sessions for the planning horizon, for example, in a day-ahead scheme. The power consumption $x_{n,t}$ is a decision variable determined by the CPO. Based on the sessions' characteristics~$\mathrm{A}_n$, the following constraints are formulated: No power can be allocated to the session~$n$ before its arrival time or after its departure time:
\begin{flalign} \label{arrival}
&x_{n,t}=0,&\forall\, n \in \Omega_{\mathrm{N}},\, t \in \Omega_{\mathrm{T}}\,:\,t<\mathrm{a}_n \lor t>\mathrm{d}_{n}
% \forall\, n \in \Omega_{\mathrm{N}}^s,\, t\in\Omega_{\mathrm{T}}\,:\, t<\mathrm{a}_n
\end{flalign}
%
% \begin{flalign} \label{pos_neg_power}
% &x_{n,t} = x^{+}_{n,t} - x^{-}_{n,t} &\forall\, n \in \Omega_{\mathrm{N}},\, t\in \Omega_{\mathrm{T}}   
% \end{flalign}
%
\noindent while the power allocation of each session is bounded by the charging pole characteristics in which a charging session is connected:
\begin{flalign} \label{bounds}
& 0 \leq x_{n,t} \leq \overline{\mathrm{x}_n}, &  \forall\, n \in \Omega_{\mathrm{N}},\, t \in \Omega_{\mathrm{T}}
\end{flalign}

Furthermore, all energy requirements should be \textit{ideally} satisfied during the charging session:
\begin{flalign} \label{serve}
&\mathcal{H}_{n}=\Delta\mathrm{t}\sum\limits_{t\in\Omega_{\mathrm{T}}}x_{n,t}, & \forall\, n \in \Omega_{\mathrm{N}}\\
&\mathrm{E}_n \leq \phi_{n} +  \eta_{n}\mathcal{H}_{n} , & \forall\, n \in \Omega_{\mathrm{N}}.
\end{flalign}
% \noindent where $\phi_{n}\geq0$ is the amount of energy not served to session~$n$ in the eventual case it is required, and $\eta_{n}$ is the charging efficiency of the EV.   
%
% \begin{multline}
% %
% 0 \leq \eta_{n}^{\mathrm{ch}}x_{n,t}^{+} - \eta_{n}^{\mathrm{dc}}x_{n,t}^{-} + \sum_{\mathclap{\tau\in \Omega_{\mathrm{T}}:\,\tau\,<\, t}}\,\left(\eta_{n}^{\mathrm{ch}}x_{n,\tau}^{+} - \eta_{n}^{\mathrm{dc}}x_{n,\tau}^{-}\right) \leq \frac{\mathrm{E}_n}{\Delta\mathrm{t}},\\ \forall\, n \in \Omega_{\mathrm{N}} ,\, t\in \Omega_{\mathrm{T}}       
% \end{multline}
% \begin{flalign}
% %
% &\sum_{\mathclap{t\in \Omega_{\mathrm{T}}}}\,\eta_{n}^{\mathrm{dc}}x_{n,t}^{-} \leq \sum_{\mathclap{t\in \Omega_{\mathrm{T}}}}\eta_{n}^{\mathrm{ch}}x_{n,t}^{+}, &\forall\, n \in \Omega_{\mathrm{N}} ,\, t\in \Omega_{\mathrm{T}}       
% \end{flalign}

\subsection{Users' Utility Functions}

Notice that the energy not served represents discomfort to the user. Hence, we propose the use of discrete utility functions to represent bilateral agreements between the CPO and the charging sessions, as depicted in Fig.~\ref{util_disc}.

Formally, we use a lower-semicontinuous piecewise-linear function representing the monotonic increasing utility function of session $n$ as:
\begin{flalign} \label{util_funct_original}
\boldsymbol{u}_{n} = \begin{cases}
      f_{n,0}\!=\!0 & \phi_{n}\!=\!0 \\
    %   f_{n,1}\!=\!\mathrm{h}_{n,1}\phi_{n} + \mathrm{b}_{n,1} & 0 < \phi_{n} \leq \alpha_{n,1} \\
      \vdots \\ 
            f_{n,k}\!=\!\mathrm{h}_{n,k}\phi_{n}\!+\!\mathrm{b}_{n,k} & \alpha_{n,k-1}<\phi_{n} \leq \alpha_{n,k} \\
      \vdots \\
      f_{n,\kappa}\!=\!\mathrm{h}_{n,\kappa}\phi_{n} + \mathrm{b}_{n,\kappa} & \alpha_{n,\kappa-1} < \phi_{n} \leq \alpha_{n,\kappa}
   \end{cases}
\end{flalign}
\noindent  where $k\in\Omega_{\mathrm{K}}^{n}=\left\{0,\,...,\,\kappa\right\}$ represents the set of break points for charging session $n$, while $\{\mathrm{h}_{n,k},\,\mathrm{b}_{n,k}\}$, and $\{\alpha_{n,k-1},\,\alpha_{n,k}\}$ denote the coefficients and lower and upper bounds of the functions, respectively, with $k\geq1$. For the sake of simplicity, we define \mbox{$\overline{\mathrm{u}}_{n,k-1}:= f_{n,k}\left(\alpha_{n,k-1}\right)$} and \mbox{$\underline{\mathrm{u}}_{n,k}:= f_{n,k}\left(\alpha_{n,k}\right)$} as the value of the function for segment \mbox{$k$} evaluated on its endpoints. Note that~\eqref{util_funct_original} implies that \mbox{$\underline{\mathrm{u}}_{n,0}=\overline{\mathrm{u}}_{n,0}=\alpha_{n,0}=0$},  $\underline{\mathrm{u}}_{n,\kappa}\leq\mathrm{C}_{n}\mathrm{e}_{n}$, and $\alpha_{n,\kappa}=\mathrm{E}_{n}$ according to Proposition~\ref{prop_1}. Thus, by defining ${\mathcal{Z}}_{n}:=\boldsymbol{u}_{n}$ as an auxiliary variable representing the cost for the energy not served, and multipliers $\overline{\lambda}_{n,k}$, $\underline{\lambda}_{n,k}\; \forall\,k\in\Omega_{\mathrm{K}}^{n}$ as the weights on these two endpoints~\cite{croxton2003comparison}, the utility function can be expressed as a linear combination of the cost at the endpoints by:
\begin{flalign}\label{util_funct_piece}
&\mathcal{Z}_{n}\!=\!\sum_{\mathclap{{k\in\Omega_{\mathrm{K}}^{n}}:\,k<\kappa}}\left(\underline{\lambda}_{n,k}\underline{\mathrm{u}}_{n,k}\!+\!\overline{\lambda}_{n,k}\overline{\mathrm{u}}_{n,k}\right)\!+\!\underline{\lambda}_{n,\kappa}\underline{\mathrm{u}}_{n,\kappa} & \forall n \in \Omega_{\mathrm{N}}
\end{flalign}
\noindent where the energy not supplied is defined as:
\begin{flalign}
&\phi_{n}\!=\!\sum_{\mathclap{{k\in\Omega_{\mathrm{K}}^{n}}:\,k<\kappa}}\left(\underline{\lambda}_{n,k}\!+\!\overline{\lambda}_{n,k}\right)\alpha_{n,k}+\!\underline{\lambda}_{n,\kappa}\alpha_{n,\kappa} &\forall \, n\in\Omega_{\mathrm{N}}
\end{flalign}
\begin{flalign}
&1=\sum_{\mathclap{{k\in\Omega_{\mathrm{K}}^{n}}:\,k<\kappa}}\left(\underline{\lambda}_{n,k}\!+\!\overline{\lambda}_{n,k}\right) +\underline{\lambda}_{n,\kappa} &\forall\,s \in \Omega_{\mathrm{S}},\, n\in\Omega_{\mathrm{N}}
\end{flalign}
%
%
% Finally, the necessary combinatorial requirements are enforced directly by the special ordered sets of type 2 (SOS2) constraints as:
%
\begin{flalign}\label{utili1}
&\overline{\lambda}_{n,k}\!+\!\underline{\lambda}_{n,k+1}\!=\!y_{n,k+1} & \forall \, n\in\Omega_{\mathrm{N}}, k\in\Omega_{\mathrm{K}}^{n}:\,k<\kappa
\end{flalign}
\begin{flalign}\label{utili2}
&\sum_{{{k\in\Omega_{\mathrm{K}}^{n}}:k\geq1}} y_{n,k} \leq 1 & \forall \,n\in\Omega_{\mathrm{N}}
\end{flalign}
\begin{flalign}
\label{binary_var}
&\overline{\lambda}_{n,k},\, \underline{\lambda}_{n,k} \geq 0,\; y_{n,k}\in\{0,1\} &\forall \, n\in\Omega_{\mathrm{N}},\, k\in\Omega_{\mathrm{K}}^{n}
\end{flalign}

\begin{figure}[!t]
\centering
  \psfrag{u}[][][1]{{$\boldsymbol{u_{n}}$}}
   \psfrag{a0}[][][1]{{$\alpha_{n,0}$}}
 \psfrag{a1}[][][1]{{$\alpha_{n,1}$}}
  \psfrag{a2}[][][1]{{$\alpha_{n,\kappa-1}$}}
  \psfrag{a3}[][][1]{{$\alpha_{n,\kappa}$}}
  \psfrag{phi}[][][1]{{$\boldsymbol{\phi_{n}}$}}
  \psfrag{f}[][][0.8]{{$f_{n,\kappa-1}$}}
    \psfrag{f1}[][][0.8]{{$f_{n,1}$}}
  \psfrag{f3}[][][0.8]{{$f_{n,\kappa}$}}
  \psfrag{uc}[][][0.8]{{$\mathrm{C}_{n}\mathrm{e}_{n}$}}
  \psfrag{b1}[][][1]{{$\overline{\mathrm{u}}_{n,1}$}}
  \psfrag{b2}[][][1]{{$\underline{\mathrm{u}}_{n,\kappa-1}$}}
\includegraphics[width=0.9\columnwidth]{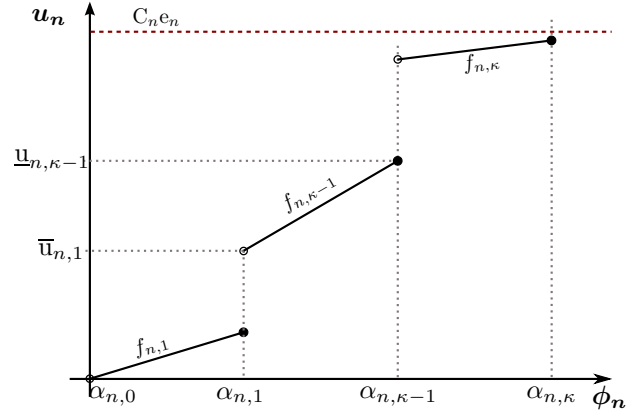}
\caption{Representation of an arbitrary discrete utility function at a charging session $n$ with $\kappa=3$.}
\label{util_disc}
\end{figure}

Notice that~\eqref{utili1} ensures that the multipliers are only different from zero in the segment where $y_{n,k}$ is activated, while~\eqref{utili2} guarantees that at the most one segment can be active. Summarizing, the set of variables from the charging sessions is defined as~\mbox{$\mathcal{Y}_{cs}=\{\mathcal{Z}_{n}, \overline{\lambda}_{n,k},\underline{\lambda}_{n,k}, y_{n,k}\}$}. Moreover, the discrete utility function in~\eqref{util_funct_original} can be used to approximate any given nonlinear, nonconvex, or linear and flat function with few large variations, making it suitable to represent prosumers~\cite{Charalampos_2021}.

\subsection{Charging Point Operator (CPO)}
The power consumption profile of the CPO through the scheduling horizon~$\Omega_{\mathrm{T}}$, is given by:
\begin{flalign} \label{p_station}
& p_{t} = \sum_{n \in \Omega_{\mathrm{N}}}x_{n,t}, &\forall\, t \in \Omega_{\mathrm{T}}   
\end{flalign}

\noindent which is constrained by upper/lower bounds representing the power capacity of the station's transformer or a flexibility request from the main grid $\overline{\mathrm{p}}$, and an energy limit $\overline{{\mathcal{E}}}$ representing eventual energy contracts with the energy supplier as follows:
\begin{flalign}
&0 \leq p_{t} \leq \overline{\mathrm{p}}, & \forall \, t \in \Omega_{\mathrm{T}}\\
\label{ener_max}
&\Delta{\mathrm{t}} \sum_{t \in \Omega_{\mathrm{T}}}p_{t}\leq \overline{{\mathcal{E}}} &
\end{flalign}

The aim of the CPO is to minimize the cost for energy not served to the charging sessions while considering its own operational limits. Furthermore, it is the CPO's responsibility to guarantee the operational conditions of the charging pool while aiming for an economically optimal operation.  Hence, the set of decision variables of the CPO is defined as $\mathcal{Y}_{cpo}=\{x_{n,t},\,p_{t}\}$. The optimal operation of the CPO considering the proposed compensation mechanism is given by:
\begin{flalign}
\begin{split}
\label{opf_nonconvex}
&\max_{\boldsymbol{\mathcal{Y}}}\hspace{5pt}  \Delta\mathrm{t}\sum_{t\in\Omega_{\mathrm{T}}}\sum_{n\in\Omega_{\mathrm{N}}}\mathrm{c}_{n,t}x_{n,t}  - \sum_{n\in\Omega_{\mathrm{N}}}\mathcal{Z}_{n}\\
&\quad \mathrm{\mathrm{s.t.}}
\hspace{20pt} \eqref{arrival}\textrm{--}\eqref{serve} ,\, \eqref{util_funct_piece}\textrm{--}\eqref{ener_max}
\end{split}
\end{flalign} 
\noindent where the set \mbox{$\boldsymbol{\mathcal{Y}}=\{\mathcal{Y}_{cs} \cup \mathcal{Y}_{cpo}\}$} determines the decision variables of the model.

\section{Study Case, Tests, and Results}  \label{sec_4}

\begin{figure}[!t]
\centering
\includegraphics[width=\columnwidth]{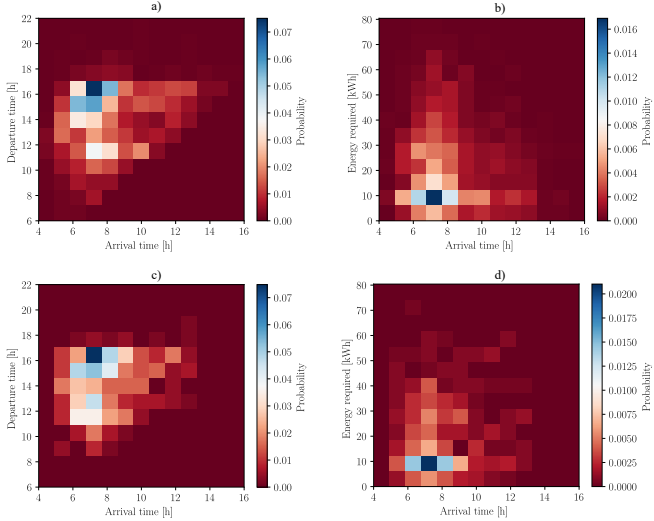}
\caption{Distribution of charging session profiles: (a) and (b) measured historical data. (c) and (d) synthetically generated data.}
\label{used_data}
\end{figure}

The proposed compensation mechanism and optimal operation CPO model are tested using historical data from an office building with charging stations in the Netherlands. The database consists of a total of $1{,}760$ charging sessions from the $1^{\mathrm{st}}$~of~June to $31^{\mathrm{st}}$~of~December 2021, containing arrival time, departure time, and energy required. The distribution of the measured data relating the arrival time and the departure time is displayed in Fig.~\ref{used_data}~(a), while the distribution relating the arrival time and the energy required is depicted in Fig.~\ref{used_data}~(b). The procedure in~\cite{Duque_2021} is followed to find the marginal distributions representing the arrival times, departure times, and required energy. With these distributions, $400$ sessions were synthetically generated, resulting in the distributions displayed in Fig~\ref{used_data}~(c) and (d). It can be seen that the obtained synthetic sessions follow the original's data distribution, highlighting the potential of MVT copulas for modeling EV charging sessions~\cite{einolander2022multivariate}.

A maximum power of $\overline{\mathrm{x}}_{n}=22\,$kW has been set for all poles, the charging efficiency $\eta_{n}=0.92$, while the transformer rating is set to $\overline{\mathrm{p}} = 700\,$kVA and the energy limit as $\overline{\mathcal{E}} = 8.2\,$MWh, unless explicitly stated. To simulate diversity among sessions, the acceptable energy percentage for each session is assume to follow a uniform distribution, \mbox{$\gamma_{n}\sim \mathrm{U}\left(0.5,1.0\right)$}. Similarly, all utility functions have been chosen randomly with $\kappa=4$ break points and the compensation capped according to Proposition~\ref{prop_1} (\mbox{$\underline{\mathrm{u}}_{n,\kappa}\leq \mathrm{C}_{n}\mathrm{e}_{n}$}). Furthermore, two fixed electricity rates have been used in this case study:
\begin{equation}
    \mathrm{C}_{n} = \begin{cases}
                    0.30\,\mathrm{\text{\texteuro}/kWh}, & \mathrm{if} \; \gamma_{n} < 0.99 \\
                    0.35\,\mathrm{\text{\texteuro}/kWh}, & \mathrm{otherwise}
                     \end{cases}
                     \quad \forall n \in \Omega_{\mathrm{N}}
\end{equation}

\noindent with $\mathrm{c}_{n,t}=\mathrm{C}_{n}\,\forall\,t\in\Omega_{\mathrm{T}}$, representing charging sessions participating in the compensation mechanism and others not willing to participate, respectively. Two main tests have been performed to assess the impact of reducing CPO's limits on the user compensation and total energy cost:

\subsection{T1 - The CPO is able to serve at least the acceptable energy to all sessions}

The total power and cumulative energy considering the base case limits is depicted in Fig.~\ref{P_and_E}~(a), where the sessions' total required energy considering the charging efficiency, $\sum\limits_{n} \mathrm{E}_{n}/\eta_{n} = 8.1\,$MWh is plotted. It can be seen that the CPO is able to supply the required energy, indicating that there is no energy not served in the base case.

\begin{figure}[!t]
\centering
\includegraphics[width=\columnwidth]{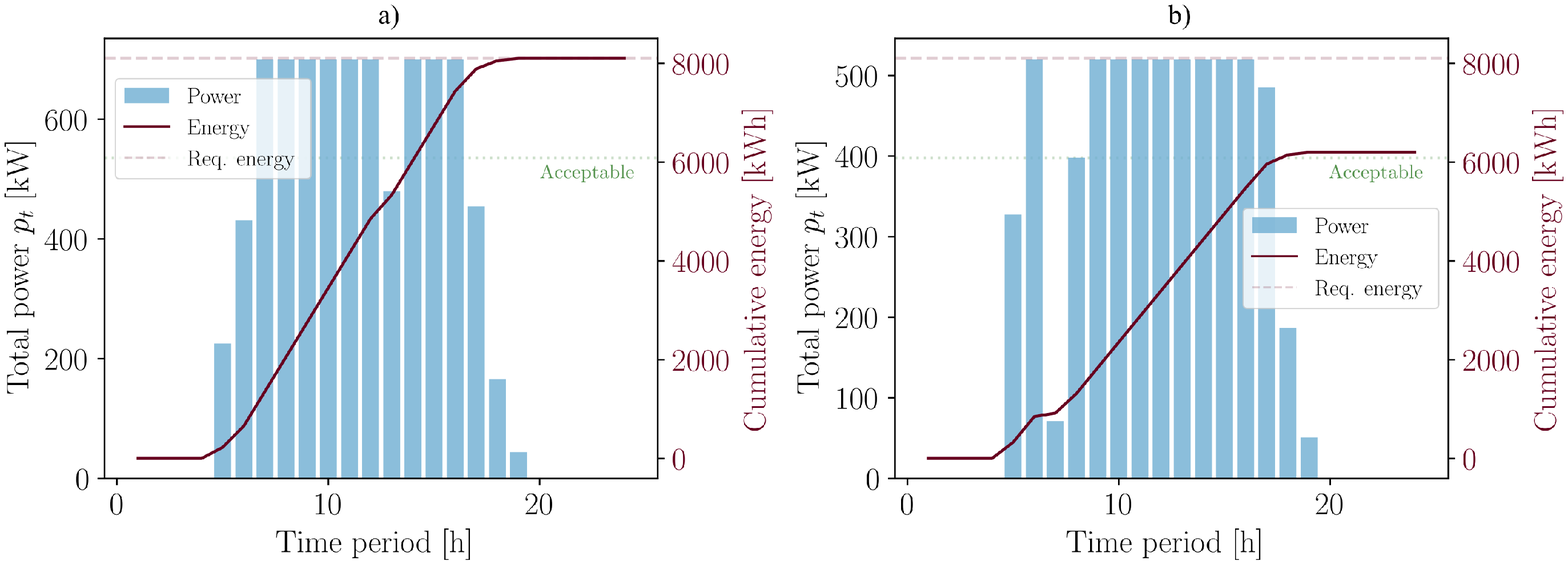}
\caption{CPO's power consumption and cumulative energy. \mbox{(a)~Energy limit~$\overline{\mathcal{E}}=8.2\,$MWh}. (b)~Energy limit~$\overline{\mathcal{E}}=6.2\,$MWh.}
\label{P_and_E}
\end{figure}

The CPO's limits are now reduced to $\overline{\mathrm{p}} = 520\,$kVA and the energy limit to \mbox{$\overline{\mathcal{E}} = 6.2\,$MWh} this may resemble an agreement with the energy supplier, a peak shave procedure, or a response to a flexibility procurement from the main grid. The resulting total power and the cumulative energy is shown in~Fig.~\ref{P_and_E}(b), where it can be seen that the CPO is able to serve more than the acceptable energy, \mbox{$\sum\limits_{n} \mathrm{e}_{n}/\eta_{n} = 6.065\,$MWh}. As expected, reducing the operational limits modifies the power profile during the planning horizon while the CPO tries to maximize its revenue. This modification leads to $6.2$\,MWh of served energy and approximately $1.9$\,MWh of energy not served, which is distributed among the charging sessions. The probability density function (PDF) and empirical cumulative distribution function (CDF) of the energy served and energy not served are depicted in Fig.~\ref{cdf_pdf} (a) and (b), respectively. From Fig.~\ref{cdf_pdf}~(a), it can be seen that $70\%$ of the sessions received at least $20$\,kWh, whereas from Fig.~\ref{cdf_pdf}~(b) can be seen that $90\%$ of the sessions had at least $10$\,kWh of energy not served.        

\begin{figure}[!b]
\centering
\includegraphics[width=\columnwidth]{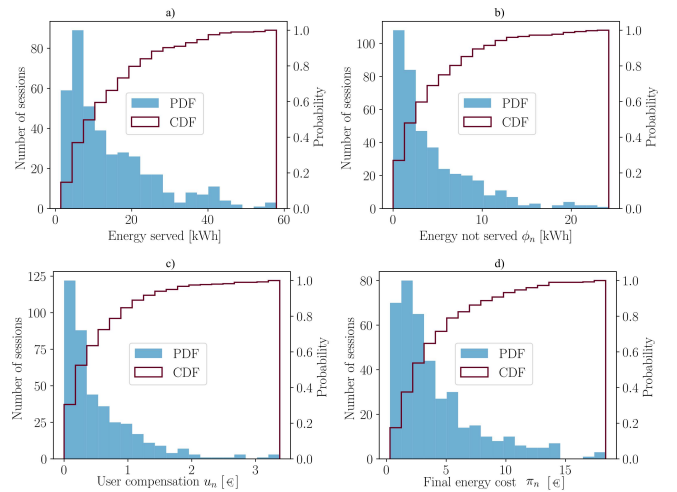}
\caption{T1 - PDF and CDF for the charging sessions. (a) Energy served. (b)~Energy not served. (c)~User compensation. (d)~Final energy cost.}
\label{cdf_pdf}
\end{figure}

The distribution of the energy not served is based on the cost for compensation defined by the users' utility functions. The PDF and CDF of the user compensation is depicted in Fig.~\ref{cdf_pdf}~(c), where it can be seen that $90\%$ of the sessions receive a compensation lower than \texteuro~$1.30$. Similarly, Fig.~\ref{cdf_pdf}~(d) shows the PDF and CDF of the final energy cost of all charging sessions, indicating that $80\%$ of the sessions pay less than \texteuro~$6.00$ for their final energy cost. This reduction in the energy cost translates into an equivalent electricity rate of \textcent/kWh\,$26.59$ in average, and a reduction of around $32\%$ in the total final energy cost of the~CPO. It is noteworthy in Fig.~\ref{cdf_pdf}~(d) that the final energy cost of all sessions is non-negative ($\pi_{n}>\textrm{\texteuro}\,0.29$), indicating the solution is revenue adequate for all sessions.

\begin{figure}[!t]
\centering
\includegraphics[width=\columnwidth]{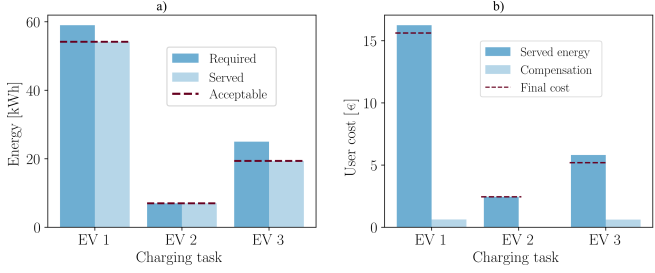}
\caption{T1 - Selected charging sessions. (a) Energy served, required, and acceptable. (b) Users' charging cost, compensation, and total.}
\label{sel_EVs}
\end{figure}

Three charging sessions (EV~1, EV~2, and EV~3) have been arbitrarily selected to further analyze the effect of the proposed compensation mechanism. Particularly, EV~1 represents a charging session with high compensation costs, required energy $\mathrm{E}_{1}=59$\,kWh and acceptable energy $\gamma_{1} = 0.918$; EV~2 represents a charging session not willing to participate in the compensation mechanism with $\mathrm{E}_{2}=7$\,kWh; and EV~3 represents a session with low compensation cost, $\mathrm{E}_{3}=25$\,kWh and $\gamma_{3} = 0.775$. The required energy, served energy, and acceptable energy of these three sessions is depicted in Fig.~\ref{sel_EVs}~(a). It can be seen that at least the acceptable energy is served to all sessions, indicating an energy not served of $4.84$\,kWh for EV~1 and of $5.63$\,kWh  for EV~3. However, as can be seen in Fig.~\ref{sel_EVs}~(b), the final energy cost is non-negative in all sessions with a compensation of \texteuro\,$0.63$ for EV~1 and \texteuro\,$0.62$ for EV~3 and a final cost of \texteuro\,$17.03$ for EV~1, \texteuro\,$2.66$ for EV~2, and \texteuro\,$5.69$ for EV~3.

\subsection{T2 - The CPO is unable to serve at least the acceptable energy to all sessions}

In this case, the CPO's limits are reduced to \mbox{$\overline{\mathrm{p}}= 520\,$}kVA and the energy limit to \mbox{$\overline{\mathcal{E}} = 5.0\,$MWh}, resembling a case where the CPO is unable to serve all the acceptable energy of the sesions. This reduction leads to $5.0$\,MWh of served energy and approximately $3.1$\,MWh of energy not served. The PDF and empirical CDF of the energy served and not served are depicted in Fig.~\ref{cdf_pdf_5000} (a) and (b), respectively. From Fig.~\ref{cdf_pdf_5000}~(a), it can be seen that $70\%$ of the sessions received at least $12$\,kWh, which is $60\%$ of the $20$\,kWh in T1. On the other hand, from Fig.~\ref{cdf_pdf_5000}~(b) it can be seen that $90\%$ of the sessions had at least $13$\,kWh of energy not served, which is $30\%$ more energy not served than in T1.

The PDF and CDF of the user compensation is depicted in Fig.~\ref{cdf_pdf_5000}~(c), where it can be seen that $90\%$ of the sessions receive a compensation lower than \texteuro~$1.69$, which is $30\%$ higher than in T1. Similarly, Fig.~\ref{cdf_pdf_5000}~(d) shows the PDF and CDF of the final energy cost of all charging sessions, indicating that $80\%$ of the sessions pay less than \texteuro~$3.72$ for their final energy cost, representing a reduction of $38\%$ compared to T1. This reduction in the energy cost translates into an equivalent electricity rate of \textcent/kWh\,$21.47$ in average, and a reduction of around $56\%$ in the total final energy cost of the CPO. Note that approximately $30\%$ of the sessions have negative final energy costs (lower than or equal to \texteuro\,$-0.22$), indicating that the solution is not revenue adequate for all sessions.

\begin{figure}[!t]
\centering
\includegraphics[width=\columnwidth]{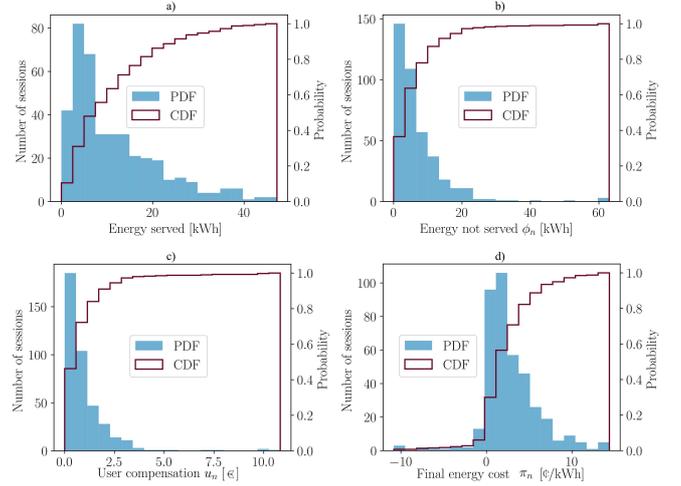}
\caption{T2 - PDF and CDF for the charging sessions. (a) Energy served. (b)~Energy not served. (c)~User compensation. (d)~Final energy cost.}
\label{cdf_pdf_5000}
\end{figure}

\begin{figure}[!t]
\centering
\includegraphics[width=\columnwidth]{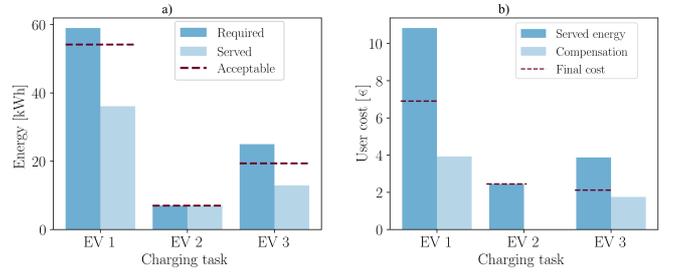}
\caption{T2 - Selected charging sessions. (a) Energy served, required, and acceptable. (b) Users' charging cost, compensation, and total.}
\label{sel_EVs_5000}
\end{figure}

In a next step, the same charging sessions as selected in T1 are studied. The required energy, served energy, and acceptable energy of these three sessions is depicted in Fig.~\ref{sel_EVs_5000}~(a), where it can be seen that EV~2 is the only session with at least the acceptable energy served. Notice that both EV~1 and EV~3 received less that the acceptable energy. However, as can be seen in Fig.~\ref{sel_EVs_5000}~(b), the final energy cost is non-negative in all sessions with a compensation of \texteuro\,$3.93$ for EV~1 and \texteuro\,$1.75$ for EV~3 and a final cost of \texteuro\,$7.84$ for EV~1, \texteuro\,$2.66$ for EV~2, and \texteuro\,$2.46$ for EV~3. This result is due to the noncontinuous characteristic of the utility functions.

Finally, Table~\ref{table_1} summarizes the sensitivity of the total cost of served energy, total user compensation ($\mathcal{U}={\sum\limits_{n} u_{n}}$), and total final energy cost ($\mathcal{P}={\sum\limits_{n} \pi_{n}}$) with respect to the CPO's power and energy capacity. Also, the minimum final energy cost of all sessions ($\rho = \min\limits_{n}\phi_{n}$), the total energy not served ($\Phi = \sum\limits_{n}\phi_{n}$), and the average unitary cost can be found in Table~\ref{table_1}. The first row in Table~\ref{table_1} represents the base case, where all the required energy is served to the sessions, \mbox{$\Phi = 0$}. The impact of reducing the peak power imported from the main grid is evident in the second row, where the energy not served increases although the energy limit is not reduced. As expected, due to the proposed compensation mechanism, the average unitary energy cost is inversely proportional to the energy not served. Notice in Table~\ref{table_1} that the compensation mechanism is revenue adequate as long as the CPO is able to supply at least the acceptable energy to the sessions ($6.065\,$MWh), which is evident as soon as $\rho$ becomes negative. On the other hand, also notice that the total revenue $\mathcal{P}$ is non-negative even even after a reduction of $63\%$ in the energy limit. The point where $\mathcal{P}$ could be interpreted as the limit where the CPO might no longer offer flexibility to the main grid operator.   

\begin{table}[!t]
\centering
\caption{Sensitivity of the compensation mechanism to the CPO's power and energy capacity.}
\label{table_1}
\resizebox{\columnwidth}{!}{%
\begin{tabular}{@{}cccccccc@{}}
\toprule
$\boldsymbol{\overline{\mathrm{p}}}$ & $\boldsymbol{\overline{\mathcal{E}}}$ & \textbf{Served cost} & $\boldsymbol{\mathcal{U}}$ & $\boldsymbol{\mathcal{P}}$ & $\boldsymbol{\rho}$ &  $\boldsymbol{\Phi}$ & \textbf{Average} \\
MW & MWh & \texteuro & \texteuro & \texteuro & \texteuro & MWh & \textcent/kWh \\ \midrule
0.7 & 8.2 & 2436.54 & 0.00 & 2436.54 & 0.65 & \textbf{0.00} & 30.08 \\
0.52 & 8.2 & 2066.40 & 130.82 & 1935.59 & 0.37 & \textbf{1.135} & 28.19 \\
0.52 & 6.2 & 1866.30 & 217.84 & 1648.46 & 0.29 & 1.748 & 26.59 \\
0.52 & 6.07 & 1827.30 & 235.78 & 1591.52 & \textbf{0.29} & 1.868 & 26.22 \\
0.52 & 6.06 & 1824.30 & 237.19 & 1587.12 & \textbf{-1.83} & 1.877 & 26.19 \\
0.52 & 5 & 1506.30 & 432.97 & 1073.34 & -10.83 & 2.852 & 21.47 \\
0.52 & 4 & 1206.30 & 650.05 & 556.25 & -11.81 & 3.772 & 13.91 \\
0.52 & 3 & 906.30 & 885.83 & \textbf{20.48} & -12.84 & 4.692 & 0.68 \\
0.52 & 2.95 & 891.30 & 897.93 & \textbf{-6.63} & -12.84 & 4.738 & -0.22 \\ \bottomrule
\end{tabular}
}
\end{table}

% Two important remarks are worth to be highlighted from the obtained results:
% \begin{itemize}
%     \item The best solution for the CPO relates \mbox{$\phi_{n} = 0\,\forall\, n\in\Omega_{\mathrm{N}}$}. However, if it is not possible, the CPO's main task is to supply at least the acceptable energy for all sessions according to Proposition~\ref{prop_1}. Hence, the CPO would prefer $\gamma_{n}\approx 0$ since $u(\phi_{n})\approx 0$. 
    
%     \item The best solution from a session's perspective relates $\phi_{n} = 0$. However, if it is not possible, the session receives a compensation for the energy not served. If the energy not served is higher than the acceptable energy but lower than the required energy, then the energy cost is lower than the total energy cost (\mbox{$0<\pi_{n}<\Delta\mathrm{t}\sum\limits_{t\in\Omega_{\mathrm{T}}}\mathrm{c}_{n,t}x_{n,t}$}). Hence, a charging session would prefer $\gamma_{n}\approx 1$. 
% \end{itemize}

\section{Conclusions} \label{sec_5}

In this paper, a compensation mechanism using EV flexibility management via energy not served has been introduced. The preferences of users and their bilateral agreements with the CPO have been modeled using discrete utility functions for the energy not served. A proof for the proposed compensation mechanism considering the particular characteristics of the charging sessions has been proposed. The proposed compensation mechanism was applied to a test scenario based on historical data from an office building with a parking lot in the Netherlands. Using the historical data set, synthetic data for 400 charging sessions was generated using multivariate \mbox{t-distribution} elliptical copulas. This approach allows capturing the complex dependency structures in EV charging data. Results validate the use of the compensation mechanism as a revenue adequate measure for the CPO as long as it is able to supply at least the acceptable energy to the sessions. At the same time, EV users perceive a reduction in the charging electricity rate as compensation according to their utility function. Hence, the proposed compensation mechanism is an attractive measure both for the CPO and the users. Future work will analyze the proposed compensation mechanism taking into account multiple CPOs as flexibility aggregators, the main grid operator, and a flexibility market structure. 

% \clearpage

\bibliographystyle{IEEEtran}
\bibliography{references}

\end{document}